\newtheorem{theorem}{Theorem}[section]
\newtheorem{definition}[theorem]{Definition}
\newtheorem{example}[theorem]{Example}
\newtheorem{remark}[theorem]{Remark}
\newtheorem{question}[theorem]{Question}
\newcommand{\Eq}{E_q}
\newcommand{\Eqn}{E_q^n}
\renewcommand{\geq}{\geqslant}
\renewcommand{\leq}{\leqslant}
\title{The Combinatorial Rank of Subsets: Metric Density in Finite Hamming Spaces}
\author{Jamolidin K. Abdurakhmanov\\
Department of Information Technologies\\
Andijan State University\\
Andijan, Uzbekistan\\
\texttt{jamolidinkamol@gmail.com, abduraxmanov@adu.uz}\\
ORCID: 0009-0008-0067-0553}
\date{}
\begin{document}
\maketitle

\begin{abstract}
We introduce a novel concept of rank for subsets of finite metric spaces $\Eqn$ (the set of all $n$-dimensional vectors over an alphabet of size $q$) equipped with the Hamming distance, where the rank $R(A)$ of a subset $A$ is defined as the number of non-constant columns in the matrix formed by the vectors of $A$. This purely combinatorial definition provides a new perspective on the structure of finite metric spaces, distinct from traditional linear-algebraic notions of rank. We establish tight bounds for $R(A)$ in terms of $D_A$, the sum of Hamming distances between all pairs of elements in $A$. Specifically, we prove that $\frac{2qD_A}{(q-1)|A|^2} \leq R(A) \leq \frac{D_A}{|A|-1}$ when $|A|/q \geq 1$, with a modified lower bound for the case $|A|/q < 1$. These bounds show that the rank is constrained by the metric properties of the subset. Furthermore, we introduce the concept of metrically dense subsets, which are subsets that minimize rank among all isometric images. This notion captures an extremal property of subsets that represent their distance structure in the most compact way possible. We prove that subsets with uniform column distribution are metrically dense, and as a special case, establish that when $q$ is a prime power, every linear subspace of $\Eqn$ is metrically dense. This reveals a fundamental connection between the algebraic and metric structures of these spaces.
\end{abstract}

\section{Introduction}
\label{sec:intro}

The study of finite metric spaces equipped with the Hamming distance has been central to coding theory and combinatorics since the seminal work of Hamming \cite{hamming1950error}. The space $\Eqn$, consisting of all $n$-dimensional vectors over an alphabet of size $q$, forms the natural ambient space for error-correcting codes and has been extensively studied from various perspectives \cite{macwilliams1977theory,vanLint1999introduction}.

In this paper, we introduce a new notion of rank for subsets of $\Eqn$ that differs fundamentally from existing concepts in the literature. While traditional approaches focus on linear-algebraic rank or rank-metric codes \cite{gabidulin1985theory}, our definition is purely combinatorial: the rank $R(A)$ of a subset $A \subseteq \Eqn$ is the number of columns that vary across the vectors in $A$.

This seemingly simple definition leads to surprisingly rich mathematics. We establish optimal bounds for $R(A)$ in terms of the sum of pairwise Hamming distances within $A$, and introduce the concept of \emph{metrically dense} subsets—those that achieve minimal rank among all isometric images. Our main theoretical contribution shows that when $q$ is a prime power, every linear subspace is metrically dense, revealing an unexpected connection between algebraic and metric properties.

\subsection{Main Contributions}

Our work makes four primary contributions:

\begin{enumerate}
\item We introduce the rank $R(A)$ for subsets $A \subseteq \Eqn$, defined as the number of non-constant columns when the elements of $A$ are arranged as rows of a matrix.

\item We prove tight bounds for $R(A)$ based on the sum of Hamming distances $D_A = \sum_{x,y \in A, x \neq y} d_H(x,y)$:
\begin{itemize}
\item If $|A|/q \geq 1$: $\frac{2qD_A}{(q-1)|A|^2} \leq R(A) \leq \frac{D_A}{|A|-1}$
\item If $|A|/q < 1$ and $q > 2$: $\frac{2(q-2)D_A}{(|A|^2-2)(q-2)-(|A|-2)^2} \leq R(A) \leq \frac{D_A}{|A|-1}$
\end{itemize}

\item We introduce metrically dense subsets as those satisfying $R(A) \leq R(B)$ for all $B$ isometric to $A$.

\item We prove that when $q$ is a prime power, every linear subspace of $\Eqn$ is metrically dense, and more generally, any subset with uniform distribution in each varying column is metrically dense.
\end{enumerate}

\subsection{Related Work}

The notion of rank in coding theory typically refers to the rank of matrices over finite fields, as studied extensively in rank-metric codes \cite{delsarte1978bilinear,gabidulin1985theory}. The Hamming distance has been fundamental since \cite{hamming1950error}, with classical bounds including the Plotkin bound \cite{plotkin1960binary} and sphere-packing bounds \cite{macwilliams1977theory}.

Recent work on extremal problems in Hamming spaces \cite{ahlswede1997contributions,frankl1995extremal} has focused on the maximum size of subsets with prescribed distance sets. Our approach differs by considering the metric properties of subsets through their rank, providing a new lens for understanding finite metric spaces.

The connection between linear codes and metric properties has been explored through weight enumerators and MacWilliams identities \cite{macwilliams1962theorem}. Our result that linear subspaces are metrically dense adds a new dimension to this relationship.

\subsection{Paper Organization}

Section \ref{sec:prelim} establishes notation and presents our key definitions. Section \ref{sec:bounds} develops the bounds for $R(A)$ through careful analysis of column contributions. Section \ref{sec:dense} introduces metrically dense subsets and proves our main theorem about linear subspaces. Section \ref{sec:conclude} discusses open problems and potential applications.

\section{Preliminaries and Definitions}
\label{sec:prelim}

For natural numbers $n, q \geq 2$, we define:
\[
\Eq = \{0, 1, 2, \ldots, q - 1\}, \quad \Eqn = \{(x_1, x_2, \ldots, x_n) \mid x_i \in \Eq, \, i = 1, 2, \ldots, n\}.
\]

The set $\Eqn$ forms a metric space under the Hamming distance $d_H$, where for $x = (x_1, \ldots, x_n)$ and $y = (y_1, \ldots, y_n)$ in $\Eqn$:
\[
d_H(x, y) = |\{i : x_i \neq y_i\}|.
\]

When $q$ is a prime power, $\Eq$ can be identified with a finite field, making $\Eqn$ an $n$-dimensional vector space over $\Eq$.

\subsection{Faces and Rank}

A \emph{$k$-dimensional face} of $\Eqn$ is the set of all vectors where exactly $k$ coordinates vary freely while the remaining $n-k$ coordinates are fixed. The number of $k$-dimensional faces is $\binom{n}{k}q^{n-k}$.

\begin{definition}[Rank of a Subset]
\label{def:rank}
For any subset $A \subseteq \Eqn$, the \emph{rank} of $A$, denoted $R(A)$, is the dimension of the smallest face containing $A$.
\end{definition}

Equivalently, $R(A)$ can be computed by forming the matrix $M_A$ whose rows are the vectors in $A$:
\[
M_A = \begin{pmatrix}
a_{11} & a_{12} & \cdots & a_{1n} \\
\vdots & \vdots & \cdots & \vdots \\
a_{m1} & a_{m2} & \cdots & a_{mn}
\end{pmatrix},
\]
where $m = |A|$ and each row is a vector in $A$. A column is \emph{constant} if all its entries are equal. Then $R(A)$ equals the number of non-constant columns in $M_A$.

\begin{remark}
When $\Eqn$ is a linear space, the rank $R(L)$ of a linear subspace $L$ differs from its dimension $\dim L$. For example, if $L$ is a $k$-dimensional subspace contained in a hyperplane $\{x : x_i = 0\}$, then $\dim L = k$ but $R(L) \leq n-1$.
\end{remark}

\subsection{Isometry and Distance Sum}

Two subsets $A, B \subseteq \Eqn$ are \emph{isometric} if there exists a bijection $\phi: A \to B$ preserving Hamming distance: $d_H(\phi(x), \phi(y)) = d_H(x, y)$ for all $x, y \in A$.

\begin{example}
Let $n = 4$, $q = 2$, and consider:
\begin{align}
A &= \{(0,0,0,0), (0,0,1,1), (0,1,0,1), (0,1,1,0)\}, \\
B &= \{(0,0,0,0), (0,0,1,1), (0,1,0,1), (1,0,0,1)\}.
\end{align}
Both sets have pairwise distances all equal to 2, making them isometric. However, $R(A) = 3$ while $R(B) = 4$, showing that rank is not preserved under isometry.
\end{example}

For any subset $A \subseteq \Eqn$ with $|A| = m \geq 2$, we define:
\[
D_A = \sum_{\substack{x,y \in A \\ x \neq y}} d_H(x, y),
\]
the sum of all pairwise Hamming distances in $A$.

\section{Bounds for Rank}
\label{sec:bounds}

Our main technical result establishes tight bounds for $R(A)$ in terms of $D_A$ and $|A|$.

\begin{theorem}[Main Bounds]
\label{thm:main}
For any subset $A \subseteq \Eqn$ with $m = |A| \geq 2$:
\begin{enumerate}
\item If $m/q \geq 1$ and $q \geq 2$:
\[
\frac{2qD_A}{(q-1)m^2} \leq R(A) \leq \frac{D_A}{m-1}.
\]
\item If $m/q < 1$ and $q > 2$:
\[
\frac{2(q-2)D_A}{(m^2-2)(q-2)-(m-2)^2} \leq R(A) \leq \frac{D_A}{m-1}.
\]
\end{enumerate}
\end{theorem}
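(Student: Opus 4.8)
The plan is to prove all four inequalities through a single mechanism: a column-by-column decomposition of $D_A$. For each coordinate $j$ let $n_{j,s}$ be the number of vectors of $A$ whose $j$th entry equals the symbol $s \in \Eq$, so that $\sum_{s} n_{j,s} = m$. Counting the unordered pairs of vectors that disagree in coordinate $j$ gives the contribution
\[
C_j \;=\; \binom{m}{2} - \sum_{s \in \Eq}\binom{n_{j,s}}{2} \;=\; \frac{1}{2}\Bigl(m^2 - \sum_{s}n_{j,s}^2\Bigr),
\]
and because each pair is counted once in every coordinate where it disagrees, $D_A = \sum_{j=1}^{n} C_j$. A constant column has $C_j = 0$, so the sum runs over exactly the $R(A)$ non-constant columns. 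Every bound on $R(A)$ therefore reduces to extremizing the single quantity $\sum_{s}n_{j,s}^2$ over admissible symbol distributions, subject to $\sum_s n_{j,s} = m$ and, for a non-constant column, to at least two symbols occurring.

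I would first dispose of the common upper bound $R(A) \leq D_A/(m-1)$, which requires the minimum of $C_j$, equivalently the maximum of $\sum_s n_{j,s}^2$, over non-constant columns. Concentrating the counts as much as the non-constancy constraint permits, the extremal distribution puts $m-1$ vectors on one symbol and a single vector on a second, giving $\sum_s n_{j,s}^2 = (m-1)^2 + 1$ and hence $C_j \geq m-1$. Summing over the $R(A)$ non-constant columns yields $D_A \geq (m-1)R(A)$; equality holds precisely when every varying column has this single-outlier shape, which establishes tightness.

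For the lower bounds I need the opposite extreme, the maximum of $C_j$, i.e. the minimum of $\sum_s n_{j,s}^2$. When $m/q \geq 1$ the Cauchy--Schwarz inequality over all $q$ symbols gives $\sum_s n_{j,s}^2 \geq (\sum_s n_{j,s})^2/q = m^2/q$, so $C_j \leq (q-1)m^2/(2q)$; summing and rearranging gives $R(A) \geq 2qD_A/((q-1)m^2)$, with equality approached by the even distribution and attained exactly when $q \mid m$. When $m/q < 1$ this even distribution is infeasible, since at most $m$ of the $q$ symbols can occur, so I would argue differently: in a non-constant column fix two occupied symbols $s_1 \neq s_2$, each of count at least $1$, and apply Cauchy--Schwarz to the remaining $q-2$ symbols to obtain $\sum_s n_{j,s}^2 \geq n_{j,s_1}^2 + n_{j,s_2}^2 + (m - n_{j,s_1} - n_{j,s_2})^2/(q-2)$. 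Minimizing the right-hand side over $n_{j,s_1}, n_{j,s_2} \geq 1$ and using $m < q$ to locate the minimum at the corner $n_{j,s_1} = n_{j,s_2} = 1$ yields $\sum_s n_{j,s}^2 \geq 2 + (m-2)^2/(q-2)$. Substituting into $C_j$ and summing over the $R(A)$ columns produces the stated bound with denominator $(m^2-2)(q-2) - (m-2)^2$.

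The step I expect to be the main obstacle is the case $m/q < 1$. The delicate point is the constrained minimization of $n_{j,s_1}^2 + n_{j,s_2}^2 + (m - n_{j,s_1} - n_{j,s_2})^2/(q-2)$: its unconstrained critical point sits at $n_{j,s_1} = n_{j,s_2} = m/q < 1$, so I must verify that on the feasible region $n_{j,s_1}, n_{j,s_2} \geq 1$ the convex objective is minimized at the boundary corner (this is exactly where the hypothesis $m < q$ enters), and that the continuous relaxation of the remaining $q-2$ counts is a legitimate lower bound for the integer-valued sum. I would also confirm that the closed-form denominator is positive for all $q > m \geq 2$ and that the two regimes meet at the boundary $m = q$. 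Finally, I would exhibit explicit extremal configurations realizable by genuinely distinct vectors across all $R(A)$ columns simultaneously; the single-outlier construction and the even-distribution construction show that the upper bound and the $m/q \geq 1$ lower bound are attained, while the $m/q < 1$ lower bound is governed by the Cauchy--Schwarz relaxation just described.
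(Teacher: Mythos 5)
Your proposal is correct and follows essentially the same route as the paper: decompose $D_A$ over the non-constant columns via $C_j = \tfrac{1}{2}\bigl(m^2 - \sum_s n_{j,s}^2\bigr)$, then get the upper bound from the concentrated distribution $(m-1,1,0,\ldots,0)$ and the lower bounds from the minimum of $\sum_s n_{j,s}^2$, with the even distribution when $m \geq q$ and the corner configuration $n_{j,s_1}=n_{j,s_2}=1$, remaining mass spread over $q-2$ symbols, when $m < q$. Your Cauchy--Schwarz-plus-convexity argument for the $m<q$ case is just a more explicit justification of the same extremal point the paper locates by Lagrange multipliers, and your attention to the continuous relaxation and the positivity of the denominator is sound.
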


\begin{proof}
Let $r = R(A)$ be the number of non-constant columns in $M_A$, labeled $\bar{u}_1, \ldots, \bar{u}_r$. Each column $\bar{u}_i$ contributes some amount $d_i$ to $D_A$, with:
\[
D_A = \sum_{i=1}^r d_i.
\]

For column $\bar{u}_i$, let $y_j$ denote the number of occurrences of symbol $j \in \Eq$. Then:
\[
\sum_{j=0}^{q-1} y_j = m.
\]

The contribution of column $\bar{u}_i$ to $D_A$ is:
\[
d_i = \sum_{0 \leq k < j \leq q-1} y_k y_j = \frac{1}{2}\left(m^2 - \sum_{j=0}^{q-1} y_j^2\right).
\]

Since $\bar{u}_i$ is non-constant, at least two $y_j$ values are positive. Without loss of generality, assume $y_0 \geq 1$ and $y_1 \geq 1$.

\textbf{Upper bound:} To minimize $d_i$, we maximize $\sum_{j=0}^{q-1} y_j^2$. Subject to $y_0 \geq 1, y_1 \geq 1$, this is maximized when $y_0 = m-1, y_1 = 1$, and $y_j = 0$ for $j \geq 2$. Thus:
\[
d_i \geq \frac{1}{2}(m^2 - ((m-1)^2 + 1)) = m - 1.
\]

Therefore $D_A \geq r(m-1)$, giving $R(A) \leq D_A/(m-1)$.

\textbf{Lower bound:} To maximize $d_i$, we minimize $\sum_{j=0}^{q-1} y_j^2$.

If $m \geq q$: The minimum occurs when $y_j = m/q$ for all $j$, giving:
\[
d_i \leq \frac{1}{2}\left(m^2 - q \cdot \frac{m^2}{q^2}\right) = \frac{(q-1)m^2}{2q}.
\]

If $m < q$: Using Lagrange multipliers, the minimum occurs when $y_0 = y_1 = 1$ and $y_j = (m-2)/(q-2)$ for $j \geq 2$, giving:
\[
d_i \leq \frac{(m^2-2)(q-2)-(m-2)^2}{2(q-2)}.
\]

The lower bounds follow from $D_A \leq r \cdot \max(d_i)$.
\end{proof}

The bounds in Theorem \ref{thm:main} are tight:

\begin{example}[Tightness of bounds]
\begin{enumerate}
\item For $A = \{(0,0,0,0), (0,1,1,1)\} \subseteq E_2^4$: We have $|A| = 2$, $R(A) = 3$, $D_A = 3$. Both bounds give $R(A) = 3$.

\item For $A = \{(0,0,0), (0,2,2)\} \subseteq E_3^3$: We have $|A| = 2$, $R(A) = 2$, $D_A = 2$. Both bounds give $R(A) = 2$.
\end{enumerate}
\end{example}

\section{Metrically Dense Subsets}
\label{sec:dense}

\begin{definition}[Metrically Dense]
\label{def:dense}
A subset $A \subseteq \Eqn$ is \emph{metrically dense} if $R(A) \leq R(B)$ for every subset $B \subseteq \Eqn$ isometric to $A$.
\end{definition}

In other words, metrically dense subsets achieve minimal rank among all their isometric images. This concept captures an extremal property: such subsets represent their distance structure in the most compact way possible.

\begin{definition}[Uniform Column Distribution]
\label{def:uniform}
A subset $A \subseteq \Eqn$ has \emph{uniform column distribution} if, for each non-constant column of the matrix $M_A$, each value $b \in \Eq$ appears exactly $|A|/q$ times (assuming $q$ divides $|A|$).
\end{definition}

\begin{theorem}[Uniform Distribution Implies Dense]
\label{thm:uniform}
If $A \subseteq \Eqn$ has uniform column distribution and $q$ divides $|A|$, then $A$ is metrically dense.
\end{theorem}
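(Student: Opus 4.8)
The plan is to show that uniform column distribution forces every non-constant column to contribute the \emph{maximum} possible amount to $D_A$, which makes the lower bound of Theorem~\ref{thm:main} coincide exactly with $R(A)$; since $D_A$ is an isometry invariant, this saturated bound then certifies that $R(A)$ is minimal among all isometric images. In short, I would reduce metric density to the tightness of the lower bound already proved.

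First I would compute $D_A$ explicitly. Writing $m=|A|$ and reusing the per-column formula $d_i=\tfrac{1}{2}\bigl(m^2-\sum_{j}y_j^2\bigr)$ from the proof of Theorem~\ref{thm:main}, uniform column distribution means $y_j=m/q$ for every symbol $j$ in each of the $r:=R(A)$ non-constant columns. Hence $\sum_j y_j^2=q\cdot(m/q)^2=m^2/q$, and each such column contributes exactly $d_i=\tfrac{(q-1)m^2}{2q}$, the extremal value identified in the $m\geq q$ analysis. Summing over the $r$ non-constant columns gives
\[
D_A = r\cdot\frac{(q-1)m^2}{2q}.
\]
Next I would invoke that $D_A=\sum_{x\neq y}d_H(x,y)$ depends only on the multiset of pairwise distances and is therefore preserved under any isometry: if $B$ is isometric to $A$, then $D_B=D_A$ and $|B|=|A|=m$. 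Because $q\mid m$ and $m\geq 2$ force $m\geq q$, case~(1) of Theorem~\ref{thm:main} applies to $B$, yielding
\[
R(B)\geq\frac{2q\,D_B}{(q-1)m^2}=\frac{2q\,D_A}{(q-1)m^2}=\frac{2q}{(q-1)m^2}\cdot r\cdot\frac{(q-1)m^2}{2q}=r=R(A).
\]
Thus $R(A)\leq R(B)$ for every $B$ isometric to $A$, which is exactly the definition of metric density.

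The step I expect to carry the conceptual weight is not any computation but the recognition that the lower bound of Theorem~\ref{thm:main} is saturated \emph{precisely} by uniform columns, so that no isometric rearrangement can push the rank below $r$; the accompanying claim that $y_j=m/q$ minimizes $\sum_j y_j^2$ subject to $\sum_j y_j=m$ is the $m\geq q$ case already settled inside the proof of Theorem~\ref{thm:main} and so requires no new work here. The only mild bookkeeping is confirming the hypotheses of case~(1) transfer to $B$, which follows immediately since $|B|=m$ and $q\mid m$.
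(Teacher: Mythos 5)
Your proposal is correct and follows essentially the same route as the paper: compute that each uniform non-constant column contributes exactly $\frac{(q-1)m^2}{2q}$ to $D_A$, conclude $D_A = R(A)\cdot\frac{(q-1)m^2}{2q}$, and then use the lower bound of Theorem~\ref{thm:main} to show no isometric image can have smaller rank. Your write-up is in fact slightly more careful than the paper's, since you explicitly apply the lower bound to the isometric image $B$ (using the isometry-invariance of $D_A$ and the observation that $q \mid m$ forces $m \geq q$, so case~(1) applies), steps the paper compresses into a single sentence.
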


\begin{proof}
Let $m = |A|$ with $q \mid m$. For each non-constant column in $M_A$, each value $b \in \Eq$ appears exactly $m/q$ times. The contribution of such a column to $D_A$ is:
\[
\binom{q}{2} \cdot \left(\frac{m}{q}\right)^2 = \frac{(q-1)q}{2} \cdot \frac{m^2}{q^2} = \frac{(q-1)m^2}{2q}.
\]

Since there are $R(A)$ such columns:
\[
D_A = R(A) \cdot \frac{(q-1)m^2}{2q}.
\]

This gives:
\[
R(A) = \frac{2qD_A}{(q-1)m^2},
\]
which equals the lower bound in Theorem \ref{thm:main}. Since no isometric image can have smaller rank, $A$ is metrically dense.
\end{proof}

As a corollary, we obtain our main result about linear subspaces:

\begin{theorem}[Linear Subspaces are Dense]
\label{thm:linear}
When $q$ is a prime power, every linear subspace $L \subseteq \Eqn$ with $|L| > 1$ is metrically dense.
\end{theorem}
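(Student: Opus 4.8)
The plan is to reduce this statement to Theorem \ref{thm:uniform} by verifying its two hypotheses for an arbitrary linear subspace $L$: that $q$ divides $|L|$, and that $L$ has uniform column distribution. Since $q$ is a prime power, I identify $\Eq$ with the finite field $\mathbb{F}_q$ and regard $L$ as a genuine linear subspace of the vector space $\mathbb{F}_q^n$. Writing $k = \dim L$, the hypothesis $|L| > 1$ forces $k \geq 1$, and $|L| = q^k$, so $q \mid |L|$ follows immediately.

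The core of the argument is the uniform distribution property. First I would observe that the $i$-th column of $M_L$ is exactly the set of values $\{x_i : x \in L\}$, which is the image of the coordinate projection $\pi_i : L \to \mathbb{F}_q$ sending $x \mapsto x_i$. This $\pi_i$ is a linear functional on $L$. The key step is to note that its image is a subspace of the one-dimensional space $\mathbb{F}_q$, hence is either $\{0\}$ or all of $\mathbb{F}_q$. Moreover, since $0 \in L$, every column contains the symbol $0$, so a column is constant precisely when $\pi_i$ is the zero functional; thus every non-constant column has $\pi_i$ surjective onto $\mathbb{F}_q$.

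Next I would apply rank--nullity: for a non-constant column, $\ker \pi_i$ is a subspace of $L$ of dimension $k-1$, so $|\ker \pi_i| = q^{k-1}$. Each value $b \in \mathbb{F}_q$ corresponds to the fiber $\pi_i^{-1}(b)$, which is a coset of $\ker \pi_i$ and therefore has exactly $q^{k-1} = |L|/q$ elements. Hence in every non-constant column each symbol of $\Eq$ occurs exactly $|L|/q$ times, which is precisely the definition of uniform column distribution. Having verified both hypotheses, Theorem \ref{thm:uniform} yields that $L$ is metrically dense.

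I expect no serious obstacle, since the statement is essentially a structural corollary of the uniform-distribution result; the only point requiring genuine care is the identification of each non-constant column with a surjective linear functional and the resulting equidistribution of its fibers as cosets of the kernel. The mild subtleties are ensuring $k \geq 1$ so that surjective projections exist at all, and treating columns where $\pi_i$ is the zero map correctly, namely recognizing that these are exactly the constant columns, which contribute nothing to $D_L$ and need satisfy no distribution condition.
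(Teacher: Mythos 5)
Your proposal is correct and follows essentially the same route as the paper: both reduce to Theorem \ref{thm:uniform} by verifying uniform column distribution, and your rank--nullity/coset argument for the fibers of the coordinate functional $\pi_j$ is exactly the content of the paper's explicit bijection $y \mapsto y + (b/z_j)z$ between $L_0$ and $L_b$. The only cosmetic difference is that you phrase the equidistribution abstractly via surjective linear functionals, while the paper writes out the translation map concretely.
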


\begin{proof}
Let $L$ be a $k$-dimensional linear subspace, so $|L| = q^k$. Consider any column $j$ of $M_L$ containing at least one nonzero entry. For each $b \in \Eq$, define:
\[
L_b = \{x \in L : x_j = b\}.
\]

Since $L$ is a subspace, $L_0 = \{x \in L : x_j = 0\}$ is also a subspace of dimension $k-1$. For any $b \neq 0$, there exists $z \in L$ with $z_j \neq 0$. The map $y \mapsto y + (b/z_j)z$ establishes a bijection between $L_0$ and $L_b$. Therefore, each value $b \in \Eq$ appears exactly $q^{k-1} = |L|/q$ times in column $j$.

Thus $L$ has uniform column distribution, and by Theorem \ref{thm:uniform}, $L$ is metrically dense.
\end{proof}

\begin{remark}
Theorem \ref{thm:uniform} shows that the property of being metrically dense is not exclusive to linear subspaces. Any subset with uniform column distribution achieves the lower bound in Theorem \ref{thm:main} and is therefore metrically dense. This includes many non-linear structures, such as certain combinatorial designs and regular graphs embedded in $\Eqn$.
\end{remark}

\begin{question}
Does every metrically dense subset have uniform column distribution? Equivalently, if $A$ is metrically dense with $q \mid |A|$, must each value appear exactly $|A|/q$ times in each non-constant column of $M_A$?
\end{question}

\section{Conclusions and Open Problems}
\label{sec:conclude}

We have introduced a new notion of rank for subsets of finite metric spaces with Hamming distance, established optimal bounds relating rank to distance structure, and proved that subsets with uniform column distribution achieve minimal rank among isometric images.

Several questions remain open:

\begin{enumerate}
\item \textbf{Characterization:} Can we characterize all metrically dense subsets of $\Eqn$? Are they precisely those with uniform column distribution?

\item \textbf{Non-divisibility case:} When $q$ does not divide $|A|$, which subsets are metrically dense? Can we extend the notion of uniform distribution to this case?

\item \textbf{Computational complexity:} Given a subset $A \subseteq \Eqn$, what is the complexity of determining whether $A$ is metrically dense?

\item \textbf{Applications:} Can metrically dense subsets be used to construct optimal codes or combinatorial designs?

\item \textbf{Generalizations:} Can these results be extended to other finite metric spaces or distance measures?
\end{enumerate}

Our results suggest that the rank function $R(\cdot)$ captures fundamental metric properties of finite spaces, with potential applications in coding theory, distributed storage, and combinatorial optimization. The characterization of metrically dense subsets through uniform column distribution provides a new perspective on the interplay between combinatorial and metric structures in finite spaces.


\begin{thebibliography}{10}

\bibitem{hamming1950error}
R.~W. Hamming.
\newblock Error detecting and error correcting codes.
\newblock {\em The Bell System Technical Journal}, 29(2):147--160, 1950.

\bibitem{macwilliams1977theory}
F.~J. MacWilliams and N.~J.~A. Sloane.
\newblock {\em The Theory of Error-Correcting Codes}.
\newblock Elsevier, 1977.

\bibitem{vanLint1999introduction}
J.~H. {Van Lint}.
\newblock {\em Introduction to Coding Theory}.
\newblock Springer, 1999.

\bibitem{gabidulin1985theory}
E.~M. Gabidulin.
\newblock Theory of codes with maximum rank distance.
\newblock {\em Problemy peredachi informatsii}, 21(1):3--16, 1985.

\bibitem{delsarte1978bilinear}
P.~Delsarte.
\newblock Bilinear forms over a finite field, with applications to coding theory.
\newblock {\em Journal of Combinatorial Theory, Series A}, 25(3):226--241, 1978.

\bibitem{plotkin1960binary}
M.~Plotkin.
\newblock Binary codes with specified minimum distance.
\newblock {\em IRE Transactions on Information Theory}, 6(4):445--450, 1960.

\bibitem{ahlswede1997contributions}
R.~Ahlswede and L.~H. Khachatrian.
\newblock {\em Contributions to the Theory of Extremal Problems in Hamming Spaces}.
\newblock Springer, 1997.

\bibitem{frankl1995extremal}
P.~Frankl and V.~R{\"o}dl.
\newblock Extremal problems concerning the {H}amming distance.
\newblock {\em Journal of Combinatorial Theory, Series A}, 72(2):226--243, 1995.

\bibitem{macwilliams1962theorem}
F.~J. MacWilliams.
\newblock A theorem on the distribution of weights in a systematic code.
\newblock {\em The Bell System Technical Journal}, 42(1):79--84, 1962.

\end{thebibliography}
\end{document}